\documentclass[11pt]{amsart}
\usepackage{graphicx}
\usepackage{hyperref}
\usepackage{tensor}
\usepackage{xcolor}
\usepackage{thmtools}
\usepackage{amsaddr} %For address placement
\usepackage{tikz} % Draw commuting diagrams
\usetikzlibrary{matrix}

%%% Adjustments to amsart style
\makeatletter
\let\uppercasenonmath\@gobble% disables title uppercase
% disables author uppercase
\makeatother

\title{K\"ahler Representation Theory}
\author{Bryan W. Roberts} 
\address{Philosophy, Logic \& Scientific Method\\Centre for Philosophy of Natural and Social Sciences\\London School of Economics \& Political Science\\Houghton Street, London WC2A 2AE, UK}
\email{\href{mailto:b.w.roberts@lse.ac.uk}{b.w.roberts@lse.ac.uk}}

\author{Nicholas J. Teh}
\address{Department of Philosophy\\University of Notre Dame\\100 Malloy Hall\\Notre Dame, IN 46556, USA} 
\email{\href{mailto:nteh@nd.edu}{nteh@nd.edu}}

\subjclass{81P05}
  % Subject classification - 
  % Quantum theory: Axiomatics, foundations, philosophy: General and philosophical
  % http://www.ams.org/msc/msc2010.html?t=81Pxx&btn=Current
\keywords{Foundations of physics, Jordan-Lie-Banach algebras, K\"ahler manifolds, Quantum theory, Representation theory}

%% COMMANDS
 % For comments

\newcommand{\Al}{\mathfrak{A}} % C* algebra
\newcommand{\Alg}{\mathcal{A}} % JLB algebra
\newcommand{\PP}{\mathcal{P}} % Manifold
 % Set of states
 % Spectrum
\newcommand{\CC}{\mathbb{C}} % Complex field
\newcommand{\RR}{\mathbb{R}} % Real field
\newcommand{\HH}{\mathcal{H}} % Hilbert space
 % Inner product

\newcommand{\br}[1]{\{#1\}}
 % Trace
\newcommand{\Cat}[1]{\mathbf{Cat}(#1)}

\newtheorem{lemm}{Lemma}
\newtheorem{jlb}{JLB}
\newtheorem*{thm}{Theorem}
\theoremstyle{definition}
\newtheorem*{defi}{Definition}
 
\begin{document}
\begin{abstract}
We show that Jordan-Lie-Banach algebras, which provide an abstract characterization of quantum theory equivalent to $C^*$ algebras, can always be canonically represented in terms of smooth functions on a K\"ahler manifold.
\end{abstract}
\maketitle

\section{Introduction}

An alternative to the usual algebraic formulation of quantum theory is the geometric formulation on K\"ahler manifolds, or ``K\"ahler quantum theory''. In rough sketch, the idea is to treat the complex projective Hilbert space $\PP\HH$ as a real manifold with a complex structure. The geometry then derives from the Hilbert space inner product: the real part of the inner product gives rise to a Riemannian (K\"ahler) metric with constant holomorphic sectional curvature, while the pure imaginary part gives rise to a symplectic form.

K\"ahler quantum theory provides one convenient way to compare quantum theory to analytic mechanics, and to other theories with natural geometric descriptions. It also provides an interesting new perspective on some of the central ideas of quantum theory, including the uncertainty relations, Schr\"odinger evolution, and entanglement, all of which can be given informative geometric expressions. The basic features of this formulation were first identified in 1979 by Kibble \cite{kibble1979geometrization}. However, a great deal is now known about K\"ahler quantuum theory thanks to more recent work \cite{CirelliEtAl1990, Gibbons1992147, schilling-dissertation, AshtekSchill1999a, landsman1997transitionprob, brodyhughston1998, brody2001geometric}. Some philosophical and foundational consequences have also been explored in \cite{brody2001geometric,teh2012cloning,roberts-2014time}.

In this paper we would like to offer a new perspective on K\"ahler quantum theory, and on geometric quantum theories more generally, by viewing them as the products of representation theory. The general technique that we propose lies in the tradition of identifying an appropriate algebra, and understanding how that algebra and its operations can be naturally represented by objects on a manifold. The particular algebra that we adopt here is a class of Jordan algebra sometimes called a Jordan-Lie-Banach (JLB) algebra. Our main result is then to show that a JLB algebra can always be canonically represented as a natural algebra of functions on a K\"ahler manifold. This is directly analogous to classic Gelfand-Naimark-Segal (GNS) theorem \cite{GelfandNaimark1943gns,segal1947gns}, which found that a state over a $C^*$-algebra can be canonically represented as an algebra of bounded linear Hilbert space operators.

This connection to JLB algebras bears on the foundational question of how to define ``observables''. In orthodox quantum theory, self-adjoint operators play a unique interpretive role as the representatives of observable physical quantities. But such operators do not play any correspondingly unique role in the structure of a $C^*$ algebra, which contains many non-self-adjoint operators in addition to the self-adjoint ones. In particular, the matrix product $ab$ of two self-adjoint operators fails to be self-adjoint whenever $a$ and $b$ do not commute. This led Jordan \cite{jordan1933a} to propose that operator products in quantum theory be characterised by symmetrisation, or what is sometimes called a regular Jordan product, which is always self-adjoint:
\[
  a\circ b := \tfrac{1}{2}({ab + ba}).
\]
Jordan, von Neumann and Wigner \cite{JordvNeuWign1934,vonNeumann1936a} then tried to characterise the structure of quantum theory in terms of a Banach space equipped with a Jordan product, known as a Jordan-Banach (JB) algebra. However, it was soon realised there exist JB algebras that are not isomorphic to the self-adjoint part of a $C^*$ algebra, which suggested that JB algebras are inadequate for the description of quantum theory. Progress came when Connes \cite{connes1974a} discovered the conditions under which such an isomorphism exists. These turned out to be equivalent to the addition of a derivation structure \cite{AlfsenShultz1998a}, which Alfsen and Schultz define in terms of an order derivation \cite[Theorem 6.15]{AlfsenSchultz2003book}. We shall follow Landsman \cite{landsman1998mathematical} in defining it in terms of the Lie bracket of a JLB algebra. 

The observation of this paper is that, whereas Hilbert spaces provide a natural way to represent $C^*$ algebras, K\"ahler manifolds provide a natural way to represent JLB algebras. The relation between these two approaches is illustrated in Figure \ref{fig:commuting}. Along the top edge of the commuting diagram, a $C^*$ algebra is mapped to a JLB algebra by a particular standard of equivalence, which we shall observe is in fact a forgetful functor; this relationship has been expressed in similar ways by others \cite{AlfsenHanche-OlsenSchultz1980a,shultz1982a,brown1992a,landsman1997transitionprob}. The bottom arrow expresses the construction of K\"ahler quantum mechanics from a Hilbert space representation. The left-edge describes the ordinary Hilbert space representation theorem established by the GNS construction. The aim of this paper is to construct the right-edge, by proposing a representation $\pi$ from a JLB algebra to a particular algebra of functions on a K\"ahler manifold $K=(M,g,\Omega,J)$, where $M$ is a smooth manifold, $g$ a Riemannian metric, $J$ a complex structure and $\Omega$ a (symplectic) K\"ahler form.

\begin{figure}[tbh]\begin{center}
    \begin{tikzpicture}
  \matrix (m) [matrix of math nodes,row sep=3em,column sep=4em,minimum width=2em] {
     \Al_{C^*} & \Alg_{JLB} \\
     (\HH,\pi) & (K,\pi) \\};
   \path
       (m-1-1) edge [-stealth] node [left] {GNS} (m-2-1)
       (m-1-1) edge [-stealth] node [above] {$\cong$} (m-1-2)
       (m-2-1) edge [-stealth] node [below] {} (m-2-2)
       (m-1-2) edge [-stealth] node [right] {Rep} (m-2-2);
\end{tikzpicture}
    \caption{K\"ahler quantum mechanics as a representation theory for JLB algebras}\label{fig:commuting}
  \end{center}\end{figure}

\section{Elementary Definitions}

A \emph{$C^*$ algebra} $\Al_{C^*}$ is a complex linear Banach space, together with an associative product $ab$ and an involution ${}^*$ that satisfies,
\begin{align*}
  \text{(Cauchy-Schwarz)} \hspace{2pc} & |ab| \leq |a||b|,\\
  \text{($C^*$-property)}  \hspace{2pc} & |a^*a|=a^2.
\end{align*}

A \emph{Jordan-Lie-Banach (JLB) algebra} $\Alg_{JLB}$ is a real linear Banach space, together with a bilinear commutative product $a\circ b=b\circ a$ and an antisymmetric bracket $\br{a,b} = -\br{b,a}$ that satisfies:
\begin{align*}
  \text{(Leibniz rule)} \hspace{2pc} & \br{a\circ b,c} = a\circ\br{b,c} + b\circ\br{a,c},\\
  \text{(associator identity)} \hspace{2pc} & (a\circ b) \circ c - a \circ (b \circ c) = \{\{a,c\},b\},\\
  \text{(Jordan identity)} \hspace{2pc} & \br{\br{a,b},c} + \br{\br{b,c},a} + \br{\br{c,a},b}=0\\
  \text{(Cauchy-Schwarz)} \hspace{2pc} & |a\circ b|\leq|a||b|,\\
  \text{(triangle inequality)} \hspace{2pc} & |a|^2 \leq |a\circ a + b\circ b|.
\end{align*}
We also presume a JLB algebra contains a unit  $\textbf{1}\in\Alg_{JLB}$ satisfying $\mathbf{1}\circ a=a$ for all $a$; if a JLB algebra does not contain a unit, then one can always be added. These axioms imply the Jordan identity, and so a JLB algebra is at once a unital Jordan algebra, a Lie algebra, and a Banach space\footnote{See \cite{landsman1998mathematical} for a clear development.}.

A \emph{state on a $C^*$ algebra} is a complex linear functional $\varphi:\Al_{C^*}\rightarrow\CC$, which is both positive in the sense that $\varphi(a^*a)\geq 0$ and also normalised, $|\varphi|=1$, where $|\varphi| := \sup\br{\varphi(a) \,|\, |a|\leq 1}$. A \emph{state on a JLB algebra} is a real linear functional $\varphi:\Al_{JLB}\rightarrow\RR$, which is positive in the sense that $\varphi(a\circ a) \geq 0$, and also normalised, $\varphi(\mathbf{1})=1$.

A \emph{K\"ahler manifold} is a quadruple $K=(M,g,\Omega,J)$, where $M$ is a smooth manifold, $g$ is a Riemannian metric, $\Omega$ is a symplectic form, and $J$ is a complex structure, and where these objects satisfy,
\begin{align*}
  \text{(compatibility)} \hspace{2pc} & g(JX,JY)=g(X,Y), \;\;\; \Omega(JX,JY)=\Omega(X,Y)\\
  \text{(K\"ahler property)} \hspace{2pc} & \Omega(X,Y) = g(JX,Y)
  \end{align*}
for all vectors $X,Y$ in the tangent bundle. Note that it follows from these definitions that $\Omega(X,JY) = g(X,Y)$ and $\Omega(JX,Y) = -g(X,Y)$, and also that $g(X,JY)=-\Omega(X,Y)$.

\section{The C*-JLB Correspondence}\label{sec:c*-jlb-correspondence}

The basic correspondence between $C^*$ algebras and JLB algebras is analogous to the correspondence between the complex and real fields. First, for each $C^*$-algebra $\Al_{C^*}$, there exists a unique $JLB$-algebra $\Alg_{JLB}$ consisting of the restriction of $\Al_{C^*}$ to its self-adjoint elements. The norm is inherited from $\Al_{C^*}$, and the Jordan and Lie products are defined by,
\begin{align*}
	& a\circ b := \tfrac{1}{2}(ab+ba),\\
    & \{a,b\} := \tfrac{1}{2i}(ab - ba).
\end{align*}
Conversely, for each JLB-algebra $\Alg_{JLB}$, there exists a unique $C^*$-algebra $\Al_{C^*}$ consisting of the closure of $\Al_{JLB}$ under scalar multiplication by the complex numbers. The product, involution and norm are defined by,
\begin{align*}
	& ab := a\circ b + i\{a,b\}\\
	& (a + ib)^* := a - ib\\
	& |a| := |a^*a|^{1/2}.
\end{align*}
Note that for every element $c$ of a $C^*$ algebra there is a unique pair of self-adjoint operators $a,b$ such that $c=a+bi$.

The character of this relationship is a little clearer from the categorical perspective. Let $\Cat{C^*}$ be the category of $C^*$ algebras, in which the objects are the $C^*$ algebras and the morphisms are $*$-homomorphisms. Let $\Cat{JLB}$ be the category of JLB algebras, in which the objects are JLB algebras and the morhpisms are JLB-homomorphisms, i.e. the continuous functions that preserve the Jordan and Lie products.

These categories are known to be uniquely recoverable from each other via the prescription above \cite{landsman1998mathematical,HalvorsonClifton1999a}. However, there is also a sense in which they are not truly equivalent, because some structure is added in the construction of a $C^*$-algebras from a JLB algebra, again analogous to the complexification of the real number field. This can be made precise by observing that the construction of a $C^*$ algebra from a JLB algebra above is characterised by a functor $F:\Cat{JLB}\rightarrow\Cat{C^*}$ that is faithful and essentially surjective, but not full.

This can be seen as follows. For each JLB algebra $\Alg\in\Cat{JLB}$ let $F(\Alg)$ be the $C^*$ algebra defined by the prescription above; and for each JLB homomorphism $f:\Alg\rightarrow\Alg^\prime$ let $F(f)$ be the $*$-homomorphism defined by $F(f)(a+bi) = f(a)+f(b)i$. Then $F$ is easily seen to be a functor, since it preserves the identity element and $F(\varphi_1\circ\varphi_2)=F(\varphi_1)\circ F(\varphi_2)$. It is essentially surjective, since every $C^*$ algebra is the image of some $JLB$ algebra under $F$. It is faithful because for each pair of JLB homomorphisms $f,g$, we have that $F(f)=F(g)$ only if $f(a) + if(b) = g(a) + ig(b)$ which implies that $f=g$. However, it is not full, since involution ${}^*$ generates morphisms of $\Cat{C^*}$ that have no analogue in $\Cat{JLB}$.

A complete equivalence of categories can be restored by considering ordered pairs (the complexification of $\Cat{JLB}$) instead of single JLB algebras, and one should expect pairs of JLB algebras to arise naturally in the JLB description of quantum theory. We shall thus adopt pairs of JLB elements as our basic objects of study in the construction below.

\section{K\"ahler representation theory}\label{sec:geometric-representation-theory}

The basic technique of this paper is to represent JLB algebras in a way that is analogous to the GNS representation of $C^*$ algebras. The GNS construction shows how a $C^*$ algbera and a state $(\Al_{C^*},\varphi)$ can be associated with a canonical triple $(\HH,\pi(\Al_{C^*}),\psi)$, where $\HH$ is a Hilbert space, $\pi(\Al_{C^*})$ is a representation of $\Al_{C^*}$ in terms of bounded operators on $\HH$, and $\psi$ is a preferred vector. K\"ahler quantum theory can be viewed as a similarly direct representation of a JLB algebra. In particular, we shall see that a JLB algebra and a state $(\Alg_{JLB},\varphi)$ are associated with a canonical triple $(K,\pi,\nu)$, where $K=(M,g,\Omega,J)$ is a K\"ahler manifold, $\pi(\Alg_{JLB})$ is a representation of $\Alg_{JLB}$ into the algebra of smooth functions on $M$, and $\nu$ is a preferred point in $M$.

Let us begin by making our notion of a representation precise. The metric $g$ and the symplectic form $\Omega$ of a K\"ahler manifold each determine a natural algebra on the set of smooth functions $C^\infty(M)$. For each smooth function $f:P\rightarrow\RR$, let $X_f$ be the be the corresponding Hamiltonian vector field satisfying $\iota_{X_f}\Omega = df$. Then $g$ and $\Omega$ each induce a binary operation on  $C^\infty(M)$ given by,
\begin{align}\label{eq:induced-algebra}
\begin{split}
  & f \circ g := \tfrac{1}{2}g(X_f,X_g)\\
  & \{f,g\} := \tfrac{1}{2}\Omega(X_f,X_g).
\end{split}
\end{align}
This defines an algebra of smooth functions on any K\"ahler manifold, which may be referred to as the \emph{algebra of smooth functions on a K\"ahler manifold $K$}, and which immediately suggests the following definition.

\begin{defi}
Let $\Alg_{JLB}$ be a JLB algebra and let $K=(M,\Omega,g,J)$ be a K\"ahler manifold. If $\pi:a\mapsto f_a$ is a mapping from $\Alg$ into the algebra of smooth functions on a K\"ahler manifold defined by Equation \eqref{eq:induced-algebra}, and if this mapping satisfies,
\begin{align*}
  f_a\circ f_b & = f_{a\circ b}\\
  \{f_a, f_b\} & = f_{\{a,b\}}.\\
  |f_a| & \leq |a|
\end{align*}
for all $a,b\in\Alg_{JLB}$, then the pair $(K,\pi)$ is called a \emph{K\"ahler representation} of $\Alg_{JLB}$.
\end{defi}

\section{The canonical K\"ahler representation}

The main result of this paper is the following.

\begin{thm}
Every state over a JLB algebra $(\Alg_{JLB},\varphi)$ admits a K\"ahler representation $K = (M,\Omega,g,J)$, $\pi:a\mapsto f_a$ and a preferred point point $\nu\in M$ such that the triple $(K,\pi,\nu)$ is the unique one to satisfy the following conditions:
\begin{enumerate}
\item $f_a(\nu) = \varphi(a)$ for all $a\in\Alg$; and
\item (K\"ahler Cyclicity) Any two points in $M$ are connected by a piecewise smooth Hamiltonian curve generated by the Hamiltonian vector fields associated with $\{ f_a \}$,
\end{enumerate}
where uniqueness is defined up to K\"ahler isomorphism, i.e. a diffeomorphism that preserves both $g$ and $\Omega$.
\end{thm}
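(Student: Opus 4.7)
The plan is to reduce to the GNS theorem via the $C^*$--JLB correspondence of Section \ref{sec:c*-jlb-correspondence} and then transport the resulting Hilbert space representation onto projective Hilbert space, which carries a canonical K\"ahler structure. Given $(\Alg_{JLB},\varphi)$, first form the $C^*$ algebra $\Al_{C^*}=F(\Alg_{JLB})$ and extend $\varphi$ to the complex-linear state $\tilde\varphi(a+ib):=\varphi(a)+i\varphi(b)$. The GNS construction then yields a Hilbert space $\HH$, a $*$-representation $\pi_{C^*}$, and a cyclic unit vector $\psi$ with $\tilde\varphi(a)=\langle\psi,\pi_{C^*}(a)\psi\rangle$.

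Next I construct the K\"ahler triple. Let $\PP\HH$ carry its canonical K\"ahler structure --- the Fubini--Study metric $g$, symplectic form $\Omega$, and complex structure $J$ inherited from multiplication by $i$ --- and for each $a\in\Alg_{JLB}$ define the expectation value function
\[
f_a([\phi]):=\frac{\langle\phi,\pi_{C^*}(a)\phi\rangle}{\langle\phi,\phi\rangle}.
\]
Set $\nu:=[\psi]$, and let $M\subseteq\PP\HH$ be the orbit of $\nu$ under all piecewise-smooth Hamiltonian flows of the vector fields $X_{f_a}$, equipped with the restricted K\"ahler structure. Then condition (1) holds because $f_a(\nu)=\langle\psi,\pi_{C^*}(a)\psi\rangle=\varphi(a)$, and K\"ahler cyclicity holds by construction.

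The computational heart of the proof is the verification that $(K,\pi)$ satisfies the K\"ahler representation axioms. At a horizontal representative $\phi\in\HH$ of $[\phi]\in\PP\HH$, one has the standard identification of $X_{f_a}$ with the horizontal part of $-i\pi_{C^*}(a)\phi$, from which direct calculation shows that $\tfrac{1}{2}\Omega(X_{f_a},X_{f_b})$ reproduces the expectation of $\tfrac{1}{2i}[\pi_{C^*}(a),\pi_{C^*}(b)]$ and $\tfrac{1}{2}g(X_{f_a},X_{f_b})$ the expectation of the symmetrised product $\tfrac{1}{2}(\pi_{C^*}(a)\pi_{C^*}(b)+\pi_{C^*}(b)\pi_{C^*}(a))$; by the JLB definitions of Section \ref{sec:c*-jlb-correspondence}, these are exactly $f_{\{a,b\}}$ and $f_{a\circ b}$ respectively. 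The bound $|f_a|\leq|a|$ is immediate from Cauchy--Schwarz in $\HH$ together with $|\pi_{C^*}(a)|\leq|a|$.

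For uniqueness, suppose $(K',\pi',\nu')$ is any other triple satisfying (1) and (2). Define $\Phi:M\to M'$ by setting $\Phi(\nu)=\nu'$ and extending equivariantly along Hamiltonian trajectories, i.e.\ sending a composition of time-$t_i$ flows of $X_{f_{a_i}}$ applied to $\nu$ to the corresponding composition of flows of $X_{f'_{a_i}}$ applied to $\nu'$. K\"ahler cyclicity makes $\Phi$ everywhere defined and surjective. I expect the main obstacle to be well-definedness: two path-compositions reaching the same point of $M$ must be sent to the same point of $M'$. This reduces to comparing the isotropy of the Hamiltonian pseudogroup action at $\nu$ with that at $\nu'$, which by condition (1) and the algebra relations depends only on $\varphi$ and hence agrees for both triples. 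Once $\Phi$ is well-defined, it intertwines $X_{f_a}$ with $X_{f'_a}$ by construction and therefore preserves both $\Omega$ (via the bracket relation) and $g$ (via the Jordan relation), so it is the required K\"ahler isomorphism.
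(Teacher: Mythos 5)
Your route --- complexify to a $C^*$ algebra, run GNS, and read off the K\"ahler structure from projective Hilbert space --- is genuinely different from the paper's, which never passes through a Hilbert space: it builds $M$ directly as the real-linear quotient $\Alg\times\Alg/\ker\varphi$, defines $f_a(\xi_b)=\tfrac12 g(\xi_b,\xi_{ab})$ intrinsically, and verifies the representation axioms via purely algebraic JLB identities (Appendix JLB-\ref{jlb:3} and JLB-\ref{jlb:4}). Going around the other side of the square in Figure \ref{fig:commuting} is a reasonable plan, but your version has a genuine gap at what you call the computational heart. On $\PP\HH$ with the Fubini--Study structure, the symmetric bracket of two \emph{normalized} expectation-value functions is the covariance, not the Jordan product: up to the normalization of the metric one has
\[
\tfrac12\, g(X_{f_a},X_{f_b}) \;=\; f_{a\circ b}-f_a f_b ,
\]
so the axiom $f_a\circ f_b=f_{a\circ b}$ of the paper's Definition fails wherever $f_a f_b\neq 0$. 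Concretely, for $\HH=\CC^2$ take $a=\sigma_x$, $b=\sigma_y$: then $a\circ b=0$, so $f_{a\circ b}=0$, yet on the Bloch sphere $X_{f_a}$ and $X_{f_b}$ are the rotation generators about the $x$- and $y$-axes, whose inner product is a nonzero multiple of $-xy$ and so does not vanish on the equator. This $-f_af_b$ term is produced precisely by the horizontal projection involved in descending to $\PP\HH$; it cannot be rescaled away (the functions on $\PP\HH$ must be scale-invariant, so the normalized expectation value is forced), and restricting to the Hamiltonian orbit of $[\psi]$ does not remove it. The paper's construction sidesteps this by \emph{not} projectivizing: its $M$ is in effect the GNS Hilbert space viewed as a real linear manifold, and its $f_a$ are the unnormalized quadratic forms, for which $\tfrac12 g(X_{f_a},X_{f_b})=f_{a\circ b}$ holds exactly. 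To repair your argument you would need to carry out the bottom edge of the diagram on the linear space $\HH$ itself rather than on $\PP\HH$.

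Two further points. First, you use without proof that $\tilde\varphi(a+ib)=\varphi(a)+i\varphi(b)$ is a state on $F(\Alg_{JLB})$; positivity is not automatic, since $\tilde\varphi(c^*c)=\varphi(a\circ a)+\varphi(b\circ b)-2\varphi(\br{a,b})$ for $c=a+ib$, and one needs the Cauchy--Schwarz inequality of JLB-\ref{jlb:1}(b) to conclude this is nonnegative. Second, your uniqueness sketch correctly identifies well-definedness of $\Phi$ as the crux but does not close it: the assertion that the isotropy of the Hamiltonian pseudogroup at $\nu$ ``depends only on $\varphi$'' is exactly what must be proved. The paper instead shows that the induced map is an isometry and a symplectomorphism at $\nu$ (using condition (1) and the representation identities) and then propagates this along Hamiltonian flows; if you adopt that ordering, well-definedness and the K\"ahler isomorphism property can be handled together.
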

Our proof comes in three stages: construction of the K\"ahler manifold $K$, construction of the representation $\pi$, and establishing uniqueness. We proceed with each in turn.

\subsection{Construction of the K\"ahler manifold}\label{subs:construction-of-the-kahler-manifold}

Motivated by the discussion of Section \ref{sec:c*-jlb-correspondence}, our rough strategy is to take $M$ as the set of pairs $(a,b)\in\Alg\times\Alg$. We say ``rough'' because, as in the GNS construction, one must first quotient out by the kernel of $\varphi$ in order to avoid degeneracies in the construction.

The Cartesian product $\Alg\times\Alg$ is a linear vector space over $\RR\times\RR$ with $(a,b)+(a^\prime,b^\prime):=(a+a^\prime,b+b^\prime)$. Therefore it is also a linear manifold. We define the \emph{kernel} of $\varphi:\Alg\times\Alg\rightarrow\RR$ of this manifold to be,
\[
  \ker\varphi := \{(a,b) \,|\, \varphi(a\circ a)=\varphi(b\circ b)=\varphi(\br{a,b})=0.
\]
The kernel is a linear subspace of $\Alg\times\Alg$. We can thus write $\xi_{(a,b)}$ to denote an equivalence class of the form,
\[
  \xi_{(a,b)} :=\{(a^\prime,b^\prime)\in\Alg \,|\, (a-a^\prime,b-b^\prime) \in \ker \varphi\}, 
\]
and write the resulting quotient space as $M = \Alg\times\Alg/\ker\varphi$. This is the desired linear manifold of our construction.

We define a complex structure on $M$ by the linear mapping,
\[
  J\xi_{(a,b)} := \xi_{(-b,a)}.
\]
This mapping is well-defined, in that it returns the same values for different members of the same equivalence class. It also allows us to adopt a useful shorthand that we will use throughout the rest of the argument. Namely, when it is clear that we are speaking about an element of $M := \Alg\times\Alg/\ker\varphi$, let us write,
\[
  \xi_a := \xi_{(a,0)}.
\]
Since $J\xi_b = \xi_{(0,b)}$, we can write an arbitrary point in $M$ as,
\[
  \xi_{(a,b)} = \xi_a + J\xi_b.
\]
This provides a neat notation with which to define our metric $g$ and symplectic form $\Omega$. We begin by defining two bilinear functionals on $M$, but considering firstly the case of elements of $M$ of the form $\xi_{(a,0)} = \xi_a$,
\begin{align*}
  g(\xi_a,\xi_b) := \varphi(a\circ b) &&    \Omega(\xi_a,\xi_b) := \varphi(\br{a,b}).
\end{align*}
We now extend these definitions to elements of the form $\xi_{(0,a)} = J\xi_a$ by defining,
\begin{align*}
  g(J\xi_a,\xi_b) :=  \Omega(\xi_a,\xi_b) &&    \Omega(\xi_a,J\xi_b) := g(\xi_a,\xi_b)\\
  g(\xi_a,J\xi_b) :=  -\Omega(\xi_a,\xi_b) &&    \Omega(J\xi_a,\xi_b) := -g(\xi_a,\xi_b).
\end{align*}
This is enough to define $g$ and $\Omega$ on all of $M$, since both are taken to be linear. For example, linearity implies $g(\xi_a+J\xi_b,\xi_c+J\xi_d) = g(\xi_a,\xi_c) + g(J\xi_b,\xi_c) + g(\xi_a,J\xi_d) + g(J\xi_b,J\xi_d)$. The first three terms are immediately given by our definitions, and the fourth term is equal to $\Omega(\xi_b,J\xi_d)$, which is again given by our definitions. Note that these definitions imply $\Omega(J\xi_a,J\xi_b) = \Omega(\xi_a,\xi_b)$ and $g(J\xi_a,J\xi_b) = g(\xi_a,\xi_b)$, so both compatibility and the K\"ahler property are satisfied by these functionals.

One can further verify that $\Omega$ and $g$ are well-defined in the sense that they take the same value when we replace $a\in\xi_a$ with another element $a^\prime$ of the same equivalence class. In particular, if $a,a^\prime\in\xi_a$, then $(a-a^\prime,b)\in\ker\varphi$, and thus $\varphi((a-a^\prime)\circ(a-a^\prime))=0$. But since states on a JLB algebra obey the Cauchy-Schwarz inequalities (Appendix JLB-\ref{jlb:1}), $\varphi(c\circ c)=0$ only if $\varphi(c\circ b) = \varphi(\br{c,b}) = 0$ for all $b\in\Alg$. Therefore, letting $c=a-a^\prime$, we have,
\begin{align*}
  g(\xi_a,\xi_b) - g(\xi_{a^\prime},\xi_b) & = \varphi((a - a^\prime)\circ b) = 0, \text{ and}\\
  \Omega(\xi_a,\xi_b) - \Omega(\xi_{a^\prime},\xi_b) & = \varphi(\br{a - a^\prime, b}) = 0
\end{align*}
for all $b\in\Alg$. Thus, $ g(\xi_a,\xi_b) = g(\xi_{a^\prime},\xi_b)$ and $\Omega(\xi_a,\xi_b) = \Omega(\xi_{a^\prime},\xi_b)$. Symmetric reasoning holds of the right-hand argument, and for elements of the form $J\xi_a$, and so $g$ and $\Omega$ are well-defined as functionals on $M$.

It is now a straightforward exercise to see that $g$ is symmetric, positive and non-degenerate. It is symmetric because the Jordan product $\circ$ in our JLB algebra is commutative. To see that it is positive, observe that by our definitions,
\begin{equation}\label{eq:expanded-g}
  g(\xi_a+J\xi_b,\xi_c+J\xi_d) = \varphi(a\circ c) - \varphi(\br{a,d}) + \varphi(\br{b,c}) + \varphi(b\circ d).
\end{equation}
Let $c=a$ and $d=b$. Then we find that $g(\xi_a+J\xi_b,\xi_a+J\xi_b) = \varphi(a\circ a) + \varphi(b\circ b)$, and hence that $g$ is positive by the positivity of $\varphi$. To see that it is non-degenerate, suppose that $g(\xi_a+J\xi_b,\xi_c+J\xi_d)=0$ for all $\xi_c + J\xi_d$. Then from the same case, $c=a$ and $d=b$, it follows that $\varphi(a\circ a) = \varphi(b\circ b) = 0$. And from the case that $c=d=a$ we have that $\varphi(\br{a,b}) = \varphi(a\circ a) = 0$. Hence, $\xi_{(a,b)}= \xi_{(0,0)}$ is the zero element, and $g$ is non-degenerate.

It is also straightforward to show $\Omega$ is antisymmetric and non-degenerate. It is antisymmetric because $\br{,}$ is too. But we also know that $\Omega(\xi_a+J\xi_b,\xi_c+J\xi_d) = g(\xi_a+J\xi_b,\xi_d-J\xi_c)$. So, since $g$ is non-degenerate, it follows that $\Omega$ is non-degenerate as well.

Our final step is to lift these functionals to tensor fields $g$ and $\Omega$ on $M$; we abuse notation slightly by using the same symbols $g,\Omega$ for both the functionals on $M$ and the tensor fields; it should be clear from context which is which. Since $M$ is a linear manifold, each point $\xi$ in the manifold defines a vector $Z_\xi$ in the tangent space of every other point $\xi^\prime$, given by $Z_\xi(f) := \tfrac{d}{ds}f(\xi^\prime + s\xi)|_{s=0}$. The functionals $g$ and $\Omega$ thus lift to tensor fields by the definition that in the tangent space at every point, 
\begin{align*}
  g(Z_{\xi_1}, Z_{\xi_2}) & := g(\xi_1,\xi_2)\\
  \Omega(Z_{\xi_1}, Z_{\xi_2}) & := \Omega(\xi_1,\xi_2),
\end{align*}
and similarly for the 1-1 tensor $J$. From the discussion above it follows that $g$ is a Riemannian metric, $\Omega$ is an almost symplectic form and $J$ is a complex structure.

We finally apply the elementary result that an almost symplectic manifold $(M,\Omega)$ is symplectic if and only if the Poisson bracket $\{,\}$ associated with $\Omega$ satisfies the Jacobi identity\footnote{For example, see \cite{wurzbacher2001a} pgs. 110-111, or \cite{landsman1998mathematical} Proposition 2.3.7.}. If three functions have the form $f_a$, $g_b$, $h_c$, where by definition $f_a(\xi_b) = g(\xi_b,\xi_{ab})$, then the Jacobi identity is satisfied. For by the definitions of these functions it is straightforward to show that $\br{f_a,\br{g_b,h_c}} + \br{g_b,\br{h_c,f_a}} + \br{h_c,\br{f_a,g_b}}$ is equal to,
\[
 \varphi\left(\br{a,\br{b,c}} + \br{b,\br{c,a}} + \br{c,\br{a,b}}\right), 
\]
which vanishes because of the Jacobi axiom for JLB algebras. Now fix any point $\xi$ in the manifold, and consider the value of an arbitrary function $f$ there. We can find always find a JLB element $a$ such that $f_a(\xi_b)=f(\xi_b)$, for example by scaling $a$ by a constant until an appropriate value is reached. So, by replacing three arbitrary functions $f,g,h$ with functions of the form $f_a$, $g_b$, $h_c$ at $\xi_b$, we find that the Jacobi identify is satisfied by general smooth functions, and so $(M,\Omega)$ is symplectic.

Combining all the above observations, it is evident that we can construct a K\"ahler manifold $K=(M,g,\Omega, J)$ from a state over a JLB algebra. We shall now see how this manifold encodes the JLB algebra.

\subsection{Construction of the representation}

In Section \ref{sec:geometric-representation-theory} we observed that the smooth functions on an arbitrary K\"ahler manifold admit two natural binary operations, $f\circ g := \frac{1}{2}g(X_f,X_g)$ and $\br{f,g} := \tfrac{1}{2}\Omega(X_f,X_g)$, which we used to define the notion of a K\"ahler representation. We now show that, on the K\"ahler manifold constructed by the prescription above, the resulting algebra of functions is a K\"ahler representation.

We will define a representation $\pi:a\mapsto f_a$, which associates each element $a\in\Alg$ of the JLB algebra with a smooth function $f_a\in C^\infty(M)$. Write an arbitrary point in the manifold as $\xi_{(b,c)} = \xi_b + J\xi_c$. Then we take $f_a$ to be the function defined by,

\begin{align}\label{eq:f_a}
\begin{split}
    f_a(\xi_b) & := \tfrac{1}{2}g(\xi_b,\xi_{a\circ b}) - \tfrac{1}{2}\Omega(\xi_b,\xi_{\br{a,b}})\\
    f_a(J\xi_c) & := \tfrac{1}{2}g(\xi_c,\xi_{a\circ c}) - \tfrac{1}{2}\Omega(\xi_c,\xi_{\br{a,c}})\\
    f_a(\xi_b+J\xi_c) & :=  f_a(\xi_b) +  f_a(J\xi_c).
\end{split}
\end{align}
It is perhaps more illuminating to think about this definition in terms of a special point $\xi_{ab}\in M$ associated with two elements $a,b\in \Alg$, defined by,
\begin{equation}\label{eq:ab-def}
  \xi_{ab} := \xi_{a\circ b} + J\xi_{\br{a,b}}.
\end{equation}

Then since $-\Omega(\xi_b,\xi_{\br{a,b}}) = g(\xi_b,J\xi_{\br{a,b}})$, we can equivalently write our representation mapping in Equation \ref{eq:f_a} as,
\begin{align}\label{eq:f_a2}
  \begin{split}
  & f_a(\xi_b) = f_a(J\xi_b) = \tfrac{1}{2}g(\xi_b,\xi_{ab}),\\
  & f_a(\xi_b+J\xi_c) =  f_a(\xi_b) +  f_a(J\xi_c).
  \end{split}
\end{align} 
Those familiar with the literature on geometric quantum mechanics will find this suggestive. If we view $\circ$ and $\br{,}$ as being associated with a decomposition into real and complex parts, then this $f_a$ is exactly what turns out to express the expectation value of a quantum observable (see for example Equation 2.4 of \cite{AshtekSchill1999a}). Here we are going the other direction: instead of beginning with a geometric quantum system and deriving Equation \eqref{eq:f_a2}, we adopt Equation \eqref{eq:f_a2} as a definition, and then derive a geometric quantum system using the structure of the JLB algebra.

We begin with a lemma, which makes use of the definition of $\xi_{ab}$ in Equation \ref{eq:ab-def} above.

\begin{lemm}\label{lemm:sa}
  $g(\xi_a,\xi_{bc}) = g(\xi_{ba},\xi_c)$.
\end{lemm}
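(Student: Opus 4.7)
The plan is to expand both sides using the definition $\xi_{bc} := \xi_{b\circ c} + J\xi_{\{b,c\}}$ together with the bilinear definitions of $g$ and $\Omega$, and then reduce the equality to an identity that follows from the associator and Jacobi axioms of the JLB algebra.

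First, I would compute the left-hand side by linearity of $g$:
\[
g(\xi_a,\xi_{bc}) = g(\xi_a,\xi_{b\circ c}) + g(\xi_a,J\xi_{\{b,c\}}) = \varphi\bigl(a\circ(b\circ c)\bigr) - \varphi\bigl(\{a,\{b,c\}\}\bigr),
\]
using the base definitions $g(\xi_a,\xi_b)=\varphi(a\circ b)$ and $g(\xi_a,J\xi_b) = -\Omega(\xi_a,\xi_b) = -\varphi(\{a,b\})$. Then I would do the analogous expansion on the right, writing $\xi_{ba} = \xi_{b\circ a} + J\xi_{\{b,a\}}$ and using $g(J\xi_a,\xi_b) = \Omega(\xi_a,\xi_b) = \varphi(\{a,b\})$, together with commutativity of $\circ$ and antisymmetry of $\{,\}$, to obtain
\[
g(\xi_{ba},\xi_c) = \varphi\bigl((a\circ b)\circ c\bigr) - \varphi\bigl(\{\{a,b\},c\}\bigr).
\]

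With both sides in hand, the lemma reduces to showing
\[
\varphi\bigl(a\circ(b\circ c)-(a\circ b)\circ c\bigr) = \varphi\bigl(\{a,\{b,c\}\} - \{\{a,b\},c\}\bigr).
\]
By the associator identity, the left-hand side equals $-\varphi(\{\{a,c\},b\})$. By the Jacobi identity in the form $\{\{a,b\},c\} + \{\{b,c\},a\} + \{\{c,a\},b\}=0$, we have $-\{\{a,c\},b\} = \{\{c,a\},b\} = -\{\{a,b\},c\} - \{\{b,c\},a\}$. A short manipulation of the right-hand side using antisymmetry of $\{,\}$ (rewriting $\{a,\{b,c\}\} = -\{\{b,c\},a\}$) gives exactly the same expression, and applying $\varphi$ concludes the argument.

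The computation itself is essentially routine; the main obstacle is simply bookkeeping the several sign conventions attached to $g(J\cdot,\cdot)$, $g(\cdot,J\cdot)$, $\Omega$, and the antisymmetry of $\{,\}$, and then recognising that the remaining identity is a direct consequence of the associator axiom combined with Jacobi — i.e.\ that the ``asymmetry'' introduced by writing $\xi_{ba}$ in place of $\xi_{ab}$ is precisely compensated by the non-associativity encoded in the associator identity.
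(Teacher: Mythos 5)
Your proof is correct and follows essentially the same route as the paper's: both expand $g(\xi_a,\xi_{bc})$ and $g(\xi_{ba},\xi_c)$ via the definitions of $\xi_{ab}$, $g$, and $\Omega$ into expressions of the form $\varphi(a\circ(b\circ c))-\varphi(\{a,\{b,c\}\})$ and $\varphi((a\circ b)\circ c)-\varphi(\{\{a,b\},c\})$, and then close the gap with exactly the associator and Jacobi identities. The only cosmetic difference is that you expand both sides and meet in the middle, whereas the paper transforms the left side directly into the right; the sign bookkeeping in your version checks out.
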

\begin{proof}
We begin by writing,
\[
  g(\xi_a,\xi_{bc}) = g(\xi_a,\xi_{b\circ c}) - \Omega(\xi_a, \xi_{\br{b,c}}) = \varphi(a\circ (b\circ c) - \br{a,\br{b,c}})
\]
Applying the associator identity to the first term and the Jacobi identity to the second term we get,
  \begin{align*}
    g(\xi_a,\xi_{bc})
    & = \varphi(\underbrace{(a\circ b)\circ c - \br{\br{a,c},b}}_{\text{Assoc. id}} + \underbrace{\br{b,\br{c,a}} + \br{c,\br{a,b}}}_{\text{Jacobi id}})\\
    & = \varphi((a\circ b)\circ c + \br{c,\br{a,b}}) = g(\xi_{b\circ a},\xi_c) + \Omega(\xi_{\br{b,a}},\xi_c)\\
    & = g(\xi_{b\circ a},\xi_c) + g(J\xi_{\br{b,a}},\xi_c) = g(\xi_{ba},\xi_c).
  \end{align*}
\end{proof}

The next lemma shows that our definition of $\pi:a\mapsto f_a$ satisfies a central property in geometric quantum mechanics, that the Hamiltonian vector field $X_b$ it generates is equivalent to what has be called its ``Schr\"odinger'' vector field \cite{schilling-dissertation,AshtekSchill1999a}.

Let $X_b$ denote the Hamiltonian vector field generated by $f_b$, i.e. the unique vector field satisfying $\iota_{X_b}\Omega = df_{b}$. Then the \emph{Schr\"odinger vector field} $Y_b$ generated by $f_b$ at a point $\xi_a\in M$ is given by,
\[
  Y_b := -J(X_{ba}). 
\]
Then we have the following, which again makes use of our definition of $\xi_{ab}$ from Equation \ref{eq:ab-def}.

\begin{lemm}[Schr\"odinger-Hamiltonian Equivalence]\label{lemm:schrodinger} 
  Let $f_b(\xi_c) = \tfrac{1}{2}g(\xi_c,\xi_{bc})$. If $Y_b$ is the Schr\"odinger vector field generated by $f_b$ and $X_b$ is the Hamiltonian vector field, then $Y_b = X_b$.
\end{lemm}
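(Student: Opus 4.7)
The strategy is to verify that $Y_b$ satisfies the defining relation $\iota_{Y_b}\Omega = df_b$ of the Hamiltonian vector field; since $X_b$ is the unique solution to this equation, the conclusion $Y_b = X_b$ follows. Because $Y_b|_{\xi_a} = -JZ_{\xi_{ba}}$ is a constant vector field in the linear-manifold sense (determined by the single point $\xi_{ba}\in M$), the verification reduces to a pointwise algebraic comparison: pair both 1-forms with a spanning family of tangent vectors $Z_{\xi_u}$ and $Z_{J\xi_u}$ and check agreement in each case.

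For the left-hand side, I would apply the K\"ahler compatibility identity $-\Omega(JX,Y) = g(X,Y)$, which collapses $\iota_{Y_b}\Omega|_{\xi_a}(V) = \Omega(-JZ_{\xi_{ba}}, V)$ to $g(Z_{\xi_{ba}}, V)$. For the right-hand side, I would compute the directional derivative of $f_b(\xi_c) = \tfrac{1}{2}g(\xi_c,\xi_{bc})$ directly from its definition. Since $\xi_{bc}$ is linear in $c$ and $g$ is bilinear, differentiation in the $Z_{\xi_u}$ direction produces
\[
  df_b|_{\xi_a}(Z_{\xi_u}) = \tfrac{1}{2}\left[g(\xi_u,\xi_{ba}) + g(\xi_a,\xi_{bu})\right].
\]
Lemma~\ref{lemm:sa} now rewrites $g(\xi_a,\xi_{bu})$ as $g(\xi_{ba},\xi_u)$, and the symmetry of $g$ merges the two terms into a single $g(\xi_{ba},\xi_u)$, which matches the left-hand side.

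The $J$-direction proceeds by the same template. The left-hand side becomes $g(\xi_{ba}, J\xi_u) = -\Omega(\xi_{ba},\xi_u)$, and expanding $\xi_{ba} = \xi_{b\circ a} + J\xi_{\{b,a\}}$ rewrites this as $\varphi(\{b,a\}\circ u) - \varphi(\{b\circ a,u\})$; the same expression should emerge from $df_b|_{\xi_a}(Z_{J\xi_u})$ after simplification. The main obstacle, and the only non-bookkeeping step, is showing that these two state-values in fact coincide: this reduces precisely to the Leibniz identity $\{b\circ a, u\} = b\circ\{a,u\} + a\circ\{b,u\}$, the JLB axiom coupling the Jordan and Lie products. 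Once this identity is invoked, the two 1-forms agree on a spanning family of tangent vectors, so $\iota_{Y_b}\Omega = df_b$ pointwise and uniqueness of the Hamiltonian vector field delivers $Y_b = X_b$.
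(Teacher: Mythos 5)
Your proposal is correct and follows the same overall strategy as the paper: verify $\iota_{Y_b}\Omega = df_b$ and conclude $Y_b = X_b$ from the non-degeneracy of $\Omega$. On directions of the form $Z_{\xi_u}$ your computation is essentially line-for-line the paper's --- bilinear expansion of the quadratic form, Lemma~\ref{lemm:sa} to convert $g(\xi_a,\xi_{bu})$ into $g(\xi_{ba},\xi_u)$, and symmetry of $g$ to merge the two halves. Where you genuinely depart is in splitting the tangent space into $Z_{\xi_u}$ and $Z_{J\xi_u}$ directions and supplying a separate check for the latter. The paper runs a single computation ``for any vector $Z$'' parametrized by a point $\xi_c$, which, read with the shorthand $\xi_c=\xi_{(c,0)}$, covers only half the tangent space; your extra check fills that in, and your diagnosis of its cost is right. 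The identity it requires is $\Omega(\xi_a,\xi_{bu})=\Omega(\xi_{ba},\xi_u)$ (the $\Omega$-self-adjointness of left multiplication by $b$), and this does follow from the Leibniz rule alone --- a useful complement to Lemma~\ref{lemm:sa}, whose proof uses the associator and Jacobi identities and delivers only the $g$-self-adjointness on arguments of the form $\xi_{(c,0)}$. One caveat worth flagging: your derivative $df_b(Z_{J\xi_u})$ tacitly takes $f_b$ at a general point to be the full quadratic form $\tfrac{1}{2}g(\xi_c+J\xi_d,\xi_{bc}+J\xi_{bd})$, cross terms included. With the literal definition in Equation~\eqref{eq:f_a}, which sets $f_b(\xi_c+J\xi_d):=f_b(\xi_c)+f_b(J\xi_d)$ and so drops the cross term $\tfrac{1}{2}\left(g(\xi_c,J\xi_{bd})+g(J\xi_d,\xi_{bc})\right)$, the directional derivative at $\xi_a$ in a $J$-direction would vanish and your check would not close. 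The paper's own computation makes the same tacit identification, so this is a shared convention rather than a gap in your argument, but if you write the $J$-direction case out in full you should state explicitly which definition of $f_b$ off the real slice you are differentiating.
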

\begin{proof}
  It is enough to show that $df_b = \iota_{Y_b}\Omega$, since then it follows that $\iota_{X_b}\Omega = df_b = \iota_{Y_b}\Omega$, and hence $X_b = Y_b$ by the non-degeneracy of $\Omega$.

Since $M$ is a linear manifold, each $\xi_c\in M$ corresponds to a vector $Z$ at each point $\xi_a$, which is defined by,  $Z(f) := \tfrac{d}{ds}f(\xi_a + s\xi_c)|_{s=0}$. Moreover, for any vector $Z$ and any function $f$ at a point we have $df(Z) = Z(f)$. So, let $Z$ be any vector in the tangent space of $\xi_a$. Then,
\begin{align*}
  df_b(Z) & = Z(f_b) = \tfrac{d}{ds}f_b(\xi_a + s\xi_c)|_{s=0} \\
       & = \tfrac{d}{ds}\tfrac{1}{2} g(\xi_a + s\xi_c, \xi_{ba} + s\xi_{bc})|_{s=0}\\
       & = \tfrac{d}{ds}\tfrac{1}{2}\left(
             g(\xi_a,\xi_{ba}) + sg(\xi_a,\xi_{bc})
             + sg(\xi_c,\xi_{ba}) + s^2g(\xi_c,\xi_{bc}) \right)|_{s=0}\\
       & = \tfrac{1}{2}\left( g(\xi_a,\xi_{bc}) + g(\xi_c,\xi_{ba}) \right).
\end{align*}
  But from Lemma \ref{lemm:sa} we have $g(\xi_a,\xi_{bc}) = g(\xi_{ba},\xi_c)$. So, these last two terms combine and we get,
\[
  df_b(Z) = g(\xi_c,\xi_{ba}) = \Omega(\xi_c, J\xi_{ba}) = \Omega(-J\xi_{ba},\xi_c) = \iota_{Y_b}\Omega(Z).
\]
\end{proof}

The final step is to show that the mapping $\pi:a\mapsto f_a$ defines a K\"ahler representation in the sense that,
\begin{align}
\begin{split}\label{eq:isomorph}
   f_{a\circ b} & = f_a\circ f_b\\
   f_{\{a,b\}} & = \{f_a, f_b\}\\
   |f_a| & \leq |a|.
\end{split}
\end{align}
These properties may be expressed in a more convenient form using our Lemmas. The algebra on $C^\infty(M)$ is defined by $f_a\circ f_b := \tfrac{1}{2}g(X_a,X_b)$ and $\br{f_a,f_b} := \tfrac{1}{2}\Omega(X_a,X_b)$, where $X_a,X_b$ are the Hamiltonian vector fields generated by $f_a$ and $f_b$, respectively. At each point $\xi_c\in M$, Lemma \ref{lemm:schrodinger} guarantees that $X_a = Y_a = -JX_{ac}$. We thus have that at each point $\xi_c\in M$,
\begin{align}
\begin{split}\label{eq:lemm-schro-app}
  f_a\circ f_b & = g(X_a,X_b) = g(-JX_{ac},-JX_{bc}) = g(X_{ac},X_{bc})\\
  \br{f_a,f_b} & = \Omega(X_a,X_b) = \Omega(-JX_{ac},-JX_{bc}) = \Omega(X_{ac},X_{bc}).
\end{split}
\end{align}
Moreover, the definition of our representation in Equation \eqref{eq:f_a2} implies that  $f_{a\circ b}(\xi_c) =  \tfrac{1}{2}g(\xi_c,\xi_{(a\circ b)c})$ and that $f_{\{a,b\}}(\xi_c) =  \tfrac{1}{2}g(\xi_c,\xi_{\{a,b\}c})$. Combing this with the calculations of \eqref{eq:lemm-schro-app} we find that to prove we have a K\"ahler representation only requires showing that,
\begin{align*}
  g(\xi_c,\xi_{(a\circ b)c}) & = g(\xi_{ac},\xi_{bc})\\
   g(\xi_c,\xi_{\{a,b\}c}) & = \Omega(\xi_{ac},\xi_{bc}).
\end{align*}
The proof of these statements are established in our third Lemma.

\begin{lemm}
   The mapping $\pi:b\mapsto f_b$ defined by $f_b(\xi_c)=f_b(J\xi_c) = \tfrac{1}{2}g(\xi_c,\xi_{bc})$ is a K\"ahler representation of $\Alg$, i.e. $f_{a\circ b}  = f_a\circ f_b$ and $f_{\{a,b\}} = \{f_a, f_b\}$.
\end{lemm}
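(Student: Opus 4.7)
The work has already been largely set up. Combining Lemma \ref{lemm:schrodinger} (which replaces each Hamiltonian vector field $X_{f_a}$ at $\xi_c$ by $-JZ_{\xi_{ac}}$) with the K\"ahler compatibility $g(JX,JY)=g(X,Y)$ and $\Omega(JX,JY)=\Omega(X,Y)$, the two desired identities $f_{a\circ b}=f_a\circ f_b$ and $f_{\{a,b\}}=\{f_a,f_b\}$ reduce, at each point $\xi_c\in M$, to the two scalar equations displayed just above the statement of the lemma:
\begin{align*}
g(\xi_c,\xi_{(a\circ b)c}) &= g(\xi_{ac},\xi_{bc}),\\
g(\xi_c,\xi_{\{a,b\}c}) &= \Omega(\xi_{ac},\xi_{bc}).
\end{align*}
The entire task is thus to verify these two equalities.

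My plan is first to derive an $\Omega$-analogue of Lemma \ref{lemm:sa}, namely $\Omega(\xi_a,\xi_{bc}) = \Omega(\xi_{ba},\xi_c)$, by the same kind of manipulation used to prove that lemma: expand $\Omega(\xi_a,\xi_{bc}) = \varphi(\{a,b\circ c\}) + \varphi(a\circ\{b,c\})$, apply the Leibniz rule to $\{a,b\circ c\}$ and $\{b\circ a,c\}$, and observe that the surviving terms cancel by commutativity of $\circ$ (Jacobi is not needed here, though it would be needed for cross-terms later). With this ``$\Omega$-slide'' together with Lemma \ref{lemm:sa} in hand, I would expand the right-hand sides of the two required identities by writing $\xi_{ac}=\xi_{a\circ c}+J\xi_{\{a,c\}}$ and $\xi_{bc}=\xi_{b\circ c}+J\xi_{\{b,c\}}$, use the K\"ahler identities $g(JX,Y)=\Omega(X,Y)$ and $g(JX,JY)=g(X,Y)$ to put every resulting term in the form $g(\xi_u,\xi_{vw})$ or $\Omega(\xi_u,\xi_{vw})$, and then apply the two slide lemmas together with the defining formulas $g(\xi_x,\xi_y)=\varphi(x\circ y)$ and $\Omega(\xi_x,\xi_y)=\varphi(\{x,y\})$ to collect both sides as a single $\varphi$ applied to a polynomial in Jordan products and Lie brackets of $a,b,c$.

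I expect this last collection step to be the main obstacle. Each side unfolds into several $\varphi$-terms involving iterated Jordan products and Lie brackets of $a,b,c$, and the matching requires a coordinated use of the associator identity (to trade $(a\circ b)\circ c$ for $a\circ(b\circ c)$ up to a $\{\{a,c\},b\}$ correction), the Leibniz rule (to expand $\{a\circ b,c\}$ and $\{a,b\circ c\}$), and the Jacobi identity (to simplify any remaining iterated brackets). As a sanity check that the necessary cancellation must occur, one can transport the two identities through the $C^*$--JLB correspondence of Section \ref{sec:c*-jlb-correspondence}: in the GNS picture $\xi_{xy}\leftrightarrow xy|\psi\rangle$, with $g$ and $\Omega$ the real and imaginary parts of the Hilbert inner product, the first identity becomes $\tfrac{1}{2}\mathrm{Re}[\varphi(cabc)+\varphi(cbac)] = \mathrm{Re}\,\varphi(cabc)$, which is immediate from $(cabc)^* = cbac$ together with self-adjointness of $a,b,c$; the second identity reduces similarly to the corresponding statement about imaginary parts. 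This both confirms the identities and signals the right grouping of terms to aim for in the direct JLB computation.
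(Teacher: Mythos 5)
Your reduction is right and matches the paper's: via Lemma \ref{lemm:schrodinger} and the compatibility conditions, everything comes down to the two scalar identities $g(\xi_c,\xi_{(a\circ b)c}) = g(\xi_{ac},\xi_{bc})$ and $g(\xi_c,\xi_{\{a,b\}c}) = \Omega(\xi_{ac},\xi_{bc})$. Your proposed ``$\Omega$-slide'' $\Omega(\xi_a,\xi_{bc})=\Omega(\xi_{ba},\xi_c)$ is also true and follows from the Leibniz rule as you say, but once you expand $\xi_{ac}=\xi_{a\circ c}+J\xi_{\{a,c\}}$ and $\xi_{bc}=\xi_{b\circ c}+J\xi_{\{b,c\}}$ you no longer need either slide lemma: the right-hand sides become $\varphi\bigl((a\circ c)\circ(b\circ c)+\{\{a,c\},b\circ c\}+\{a,c\}\circ\{b,c\}-\{a\circ c,\{b,c\}\}\bigr)$ and its $\Omega$-analogue, while the left-hand sides become $\varphi\bigl(c\circ((a\circ b)\circ c)-\{c,\{a\circ b,c\}\}\bigr)$ and $\varphi\bigl(c\circ(\{a,b\}\circ c)-\{c,\{\{a,b\},c\}\}\bigr)$.

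The gap is that the equality of these pairs of polynomials in $\circ$ and $\{\cdot,\cdot\}$ \emph{is} the substance of the lemma, and you leave it unproved. You correctly name it ``the main obstacle'' and list the right tools (associator identity, Leibniz rule, Jacobi identity), but you do not carry out the cancellation; the paper does exactly this in the appendix identities JLB-\ref{jlb:3} and JLB-\ref{jlb:4}, each of which requires several coordinated applications of those axioms. The sanity check through the $C^*$--JLB correspondence is a genuinely useful heuristic and does confirm that the target identities must hold, but as written it is not a proof in the abstract JLB setting: to promote it you would need to pass to the complexified algebra, extend the state, run the GNS construction, identify $g$ and $\Omega$ with the real and imaginary parts of the GNS inner product, and check that this identification respects the two quotient constructions --- none of which appears in the proposal. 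So what you have is a correct plan, structurally the same as the paper's, with the decisive computation still outstanding.
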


\begin{proof}
Thanks to Lemma \ref{lemm:schrodinger}, it is enough to establish,
\begin{enumerate}
    \item[(a)] $g(\xi_c,\xi_{(a\circ b)c}) = g(\xi_{ac},\xi_{bc})$
    \item[(b)] $g(\xi_c,\xi_{\{a,b\}c}) = \Omega(\xi_{ac},\xi_{bc})$.
\end{enumerate}
We begin by expanding the left-hand-side of (a) as,
\begin{align*}
  g(\xi_c,\xi_{(a\circ b)c})
       & = g(\xi_c,\xi_{(a\circ b)\circ c}) + g(\xi_b,J\xi_{\br{a\circ b,c}})\\ 
       & = g(\xi_c,\xi_{(a\circ b)\circ c}) - \Omega(\xi_c,\xi_{\br{a\circ b,c}})\\
       & = \varphi( c\circ ((a\circ b)\circ c) - \br{c, \br{a\circ b,c}} ).
\end{align*}
Applying a JLB identity (Appendix JLB-\ref{jlb:3}) we now get,
\begin{align*}
  & = \varphi\big((a\circ c)\circ(b\circ c) + \br{\br{a,c},b\circ c}\big) + \varphi\big(\br{a,c}\circ\br{b,c} -\br{a\circ c,\br{b,c}}\big)\\
  & = g(\xi_{a\circ c},\xi_{b\circ c}) + \Omega(\xi_{\br{a,c}},\xi_{b\circ c})
      - \Omega(\xi_{a\circ c},\xi_{\br{b,c}}) + g(\xi_{\br{a,c}},\xi_{\br{b,c}})\\
  & = g(\xi_{a\circ c},\xi_{b\circ c}) + g(J\xi_{\br{a,c}},\xi_{b\circ c})
     + g(\xi_{a\circ c},J\xi_{\br{b,c}}) + g(J\xi_{\br{a,c}},J\xi_{\br{b,c}})\\
  & = g(\xi_{ac},\xi_{b\circ c}) + g(\xi_{ac},J\xi_{\br{b,c}})\\
  & = g(\xi_{ac}, \xi_{bc}).
\end{align*}
We next expand the left-hand-side of (b) to get,
\begin{align*}
  g(\xi_c,\xi_{\br{a,c}c})
      & = g(\xi_c,\xi_{\br{a,b}\circ c}) + g(\xi_c, J\xi_{\br{\br{a,b},c}})\\
      & = g(\xi_c,\xi_{\br{a,b}\circ c}) - \Omega(\xi_c, \xi_{\br{\br{a,b},c}})\\
      & = \varphi\big( c\circ(\br{a,b}\circ c) - \br{c, \br{\br{a,b},c}} \big).
\end{align*}
Applying another identity (Appendix JLB-\ref{jlb:4}) this becomes,
\begin{align*}
  & = \varphi\left(\br{a\circ c,b\circ c} - \br{a,c}\circ(b\circ c) + (a\circ c)\circ\br{b,c} + \br{\br{a,c},\br{b,c}}\right)\\
  & = \Omega(\xi_{a\circ c},\xi_{b\circ c}) - g(\xi_{\br{a,c}},\xi_{b\circ c})
      + g(\xi_{a\circ c},\xi_{\br{b,c}}) + \Omega(\xi_{\br{a,c}},\xi_{\br{b,c}})\\
  & =  \Omega(\xi_{a\circ c},\xi_{b\circ c}) + \Omega(J\xi_{\br{a,c}},\xi_{b\circ c})
      + \Omega(\xi_{a\circ c},J\xi_{\br{b,c}}) + \Omega(J\xi_{\br{a,c}},J\xi_{\br{b,c}})\\
  & =  \Omega(\xi_{ac},\xi_{b\circ c}) + \Omega(\xi_{ac},J\xi_{\br{b,c}})\\
  & =  \Omega(\xi_{ac},\xi_{bc}).
\end{align*}
\end{proof}

\subsection{Establishing uniqueness}

Given a state $\varphi$ on a $C^*$ algebra $\Al$, the original GNS construction produces a triple $(\HH,\pi,\psi)$ consisting of a Hilbert space $\HH$, a representation $\pi$, and a unit vector $\psi$ satisfying (i) $\varphi(A) = (\psi, \pi(A)\psi)$ for all $A\in\Al$, and (ii) the `cyclicity' condition that $\pi(\Al)\psi$ is dense in $\mathcal{H}$. One then finds that the representation is unique in the sense that any other such triple $(\HH^\prime,\pi^\prime,\psi^\prime)$ satisfying conditions (i) and (ii) is unitarily equivalent to the original. 

The K\"ahler representation constructed above displays a parallel sense of uniqueness. Given a state $\varphi$ over a JLB algebra $\Alg$, we can construct a triple $(K, \pi , \nu)$, where the K\"ahler manifold $K$ and the representation $\pi$ are defined as above, and $\nu$ is the point defined by $\nu:=\xi_{(\mathbf{1},\mathbf{1})}$, with $(\mathbf{1},\mathbf{1})\in\Alg\times\Alg$ and $\mathbf{1}$ the JLB unit. This triple will be seen to satisfy the conditions,
\begin{enumerate}
\item $f_a (\nu) = \varphi(a)$ for all $a\in\Alg$, and\label{cond:1}
\item (K\"ahler Cyclicity) Any two points in $M$ are connected by a piecewise smooth Hamiltonian curve generated by the Hamiltonian vector fields associated with $\{ f_a \}$.\label{cond:2}
\end{enumerate}
The first condition follows from the definition of $f_a$ in Equation \eqref{eq:f_a}: since in general we have that $f_a(\xi_b + J\xi_b) = 2f_a(\xi_b)$, it follows in particular that,
\begin{align*}
  f_a(\nu) & = f_a(\xi_{\mathbf{1}}+J\xi_{\mathbf{1}}) = 2f_a(\xi_{\mathbf{1}})\\
           & = g(\xi_{\mathbf{1}},\xi_{a\circ \mathbf{1}}) - \Omega(\xi_{\mathbf{1}},\xi_{a\circ \mathbf{1}}) = g(\xi_{\mathbf{1}},\xi_a) = \varphi(\mathbf{1}\circ a) = \varphi(a).
\end{align*}
The second condition follows because (as noted at the end of Section \ref{subs:construction-of-the-kahler-manifold}), at a fixed point $\xi \in M$, the tangent space $T_\xi M$ is spanned by a basis of Hamiltonian vector fields of the form $\{ X_a \equiv X_{f_a} | a \in \mathcal{A} \}$. This property is equivalent to the condition that we have called K\"ahler cyclicity, viz. that any two points in $M$ can be connected by a piecewise smooth Hamiltonian curve\footnote{See Proposition 2.3.7 of \cite{landsman1998mathematical}.}. 

%we can always replace a function $f$ with $f_a$ for some $a$, for example by rescaling $a$ by a constant until $f_a$ takes the appropriate value at $\xi$. But this implies that 
To establish uniqueness, we now argue that any other such triple $(K^\prime,\pi^\prime,\nu^\prime)$ satisfying conditions \eqref{cond:1} and \eqref{cond:2} is related to $(K,\pi ,\nu)$ by a K\"ahler isomorphism $U: M \rightarrow M'$, where $M$ and $M'$ are the manifolds associated with $K$ and $K'$, $\pi':a\mapsto f'_a$, and $U^*f'_a = f_a$.

To identify $U$, first note that K\"ahler Cyclicity allows us to write an arbitrary point in $M$ as $\phi (\nu)$, where $\phi$ is the flow along a piecewise smooth Hamiltonian curve connected to $\nu$. Moreover, each smooth Hamiltonian curve is generated by some function $f_a$. So, using the functions $f'_a$ on $M'$ corresponding to the same elements of the JLB algebra, we can define a piecewise smooth Hamiltonian curve from $\nu$ to some point $\phi'(\nu')$. Thus, $U:\phi(\nu)\mapsto\phi' (\nu')$ defines a bijection from $M$ to $M'$ with the property that $U:\nu\mapsto\nu'$. 

We next show that $U$ is a K\"ahler isomorphism, in that it preserves the metric and symplectic form. Let $g'$ be the metric on $M'$, and let us write $U_*$ and $U^*$ for the pushforward and pullback of $U$, respectively. From the fact that $U(\nu)=\nu'$ we have that,
\[
  (U_*f_a)(\nu') = f_a(U^{-1}\nu') = f_a(\nu) = \varphi(a) = f'_a(\nu'),
\]
where we have applied condition \eqref{cond:1} in the last two equalities. That is, at least at the point $\nu$, we have that $U$ takes $f_a$ to $f'_a$. This implies that $(U_*X_{f_a})|_{\nu'} = X_{f'_a}|_{\nu'}$, which in shorthand we will write as $U_*X_a|_{\nu'} = X'_a|_{\nu'}$. This implies that,
\begin{equation}\label{eq:isometry1}
  (U^*g')(X_a,X_b)|_\nu = g'(U_*X_a,U_*X_b)|_{U(\nu)=\nu'} = g'(X'_a,X'_b)|_{\nu'}.
\end{equation}
But using the definition of the algebra of functions on a K\"ahler manifold and the fact that $\pi$ and $\pi'$ are representations discussed in Section \ref{sec:geometric-representation-theory}, we also have,
\begin{align}\label{eq:isometry2}
  \begin{split}
  g'(X'_a,X'_b)|_{\nu'} & = f'_a\circ f'_b(\nu') = f'_{a\circ b} =  \varphi(a\circ b)\\
        & = f_{a\circ b}(\nu) = f_a\circ f_b(\nu) = g(X_a,X_b)|_{\nu}.
  \end{split}
\end{align}
Combining Equations \eqref{eq:isometry1} and \eqref{eq:isometry2} we thus find, at least at the point $\nu$, that $U^*g'=g$. But since $\phi$ and $\phi'$ are (piecewise smooth) isometries, we immediately see that $U^*g'=g$ for all vectors of the form $X_{f_a}$, and since these vectors span the tangent space this means that $U$ is an isometry. (The argument that $U$ preserves the symplectic is exactly analogous to this.)

Finally, since we can use the representation property to write $f'_a (x)$ as $f'_{a \circ 1} (x) = f'_{a} \circ f'_{1} (x) = g' (X_{f'_a} , X_{f'_1})|_x$, the fact that $U$ is an isometry shows that $U^* f'_a = f_a$, which completes the proof that the representations $\pi$ and $\pi'$ are the same up to K\"ahler isomorphism.

\section*{Appendix: JLB Identities}

\begin{jlb}\label{jlb:1}
If $\varphi:\Alg\rightarrow\RR$ is positive ($\varphi(a\circ a)\geq 0$) and linear then it satisfies the following Cauchy-Schwarz inequalities.
\begin{enumerate}
\item[(a)] $\varphi(a\circ b)\varphi(a \circ b) \leq \varphi(a\circ a)\varphi(b\circ b)$
\item[(b)] $\varphi(\{a,b\})\varphi(\{a,b\}) \leq \varphi(a\circ a)\varphi(b\circ b)$
\end{enumerate}
\end{jlb}

\begin{proof}
  Since $\varphi$ is positive and linear, we have for all $x\in\RR$ and any $a,b\in\Alg$ that,
\[
0 \leq \varphi\big((a+bx)\circ(a+bx)\big)
   = \varphi(b\circ b)x^2 + 2\varphi(a\circ b)x + \varphi(a\circ a).
\]
This is a non-negative quadratic in $x$, and thus it has a discriminant $(2\varphi(a\circ b))^2 - 4\varphi(a\circ a)\varphi(b\circ b) \leq 0$, which implies (a). Part (b) is proved similarly; see e.g. \cite{FalcetoEtAl2013_JLB}.
\end{proof}

\begin{jlb}\label{jlb:2}
$(a\circ b)\circ(a\circ b) = \br{a,b}\circ\br{a,b} + a\circ((a\circ b)\circ b) + a\circ\br{\br{a,b},b}$ for all $a,b\in\Alg$.
\end{jlb}

\begin{proof}
  By the associator identity, $(a\circ b)\circ (a\circ b) - a\circ(b\circ (a\circ b)) = \br{\br{a, a\circ b},b}$. But the Leibniz rule allows us to write this last term,
\[
  \br{\br{a, a\circ b},b} = \br{a\circ\br{a,b},b}
      = a\circ\br{\br{a,b},b} + \br{a,b}\circ\br{a,b},
\]
which proves the claim.
\end{proof}

\begin{jlb}\label{jlb:3}
$c\circ ((a\circ b)\circ c) - \br{c, \br{a\circ b,c}} = 
   (a\circ c)\circ(b\circ c) + 
   \br{\br{a,c},b\circ c} +  
   \br{a,c}\circ\br{b,c} -
   \br{a\circ c,\br{b,c}}$ for all $a,b,c\in\Alg$.
\end{jlb}

\begin{proof}
With two applications of the associator identity to the first term we express it as,
\begin{align*}
   & = c\circ\br{\br{a,c},b} + c\circ(a\circ(b\circ c)) \\
       & = c\circ\br{\br{a,c},b} + (c\circ a)\circ(b\circ c) - \br{\br{c,b\circ c},a}\\
       & = c\circ\br{\br{a,c},b} + (c\circ a)\circ(b\circ c) - \br{c,b}\circ\br{c,a} - \br{\br{c,b},a}\circ c,
\end{align*}
where the last equation substitutes $\br{c,b\circ c} = c\circ\br{c,b} + b\circ\br{c,c} = c\circ\br{c,b}$ in the third term and then applies the Leibniz rule. Turning now to the second term, we find with two more applications of the Leibniz rule:
\begin{align*}
  & = \br{c,a\circ\br{b,c}} + \br{c,\br{a,c}\circ b}\\
      & = a\circ\br{c,\br{b,c}} + \br{c,a}\circ\br{b,c} + \br{a,c}\circ\br{c,b} + \br{c,\br{a,c}}\circ b\\
      & = - 2\br{a,c}\circ\br{b,c} - b\circ\br{\br{a,c},c} + a\circ\br{c,\br{b,c}}.
\end{align*}
We now combine by subtracting, 
\begin{align*}
& c\circ ((a\circ b)\circ c) - \br{c, \br{a\circ b,c}} = \\
   & \hspace{2pc} (a\circ c)\circ(b\circ c) + \br{a,c}\circ\br{b,c}\\
     & \hspace{5pc} + \underbrace{ b\circ\br{\br{a,c},c} + c\circ\br{\br{a,c},b} }_{\br{\br{a,c},b\circ c}}\\
     & \hspace{10pc} \underbrace{- a\circ\br{c,\br{b,c}} - \br{a,\br{b,c}}\circ c }_{-\br{a\circ c,\br{b,c}}}
\end{align*}
\end{proof}

\begin{jlb}\label{jlb:4}
$c\circ(\br{a,b}\circ c) - \br{c, \br{\br{a,b},c}} = \br{a\circ c,b\circ c} - \br{a,c}\circ(b\circ c) + (a\circ c)\circ\br{b,c} + \br{\br{a,c},\br{b,c}}$ for all $a,b,c\in\Alg$.
\end{jlb}

\begin{proof}
The second term can be rewritten using the Jacobi identity,
\[
 - \br{c, \br{a\circ b,c}} = \br{c,\br{\br{b,c},a}} + \br{c,\br{\br{c,a},b}}.
\]
The first term $c\circ(\br{a,b}\circ c)$ can be rewritten using the Leibniz rule in the form $\br{a,b}\circ c = \br{a,b\circ c} - b\circ\br{a,c}$, which gives us,
\begin{align*}
      & c\circ\br{a,b\circ c} - c\circ(b\circ\br{a,c})\\
      & = \br{c\circ a, b\circ c} - a\circ\br{c,b\circ c} - c\circ(b\circ\br{a,c})\\ 
      & = \br{c\circ a, b\circ c} - a\circ(\br{c,b}\circ c) - (c\circ b)\circ\br{a,c} + \br{\br{c,\br{a,c}},b},
\end{align*}
where the last equality applies the Leibniz rule to the second expression and the associator identity to the third. We thus find that these two terms combined give,
\begin{align*}
    &\br{c\circ a, b\circ c} - (c\circ b)\circ\br{a,c} + \underbrace{\br{\br{a,\br{b,c}},c} + a\circ(c\circ\br{b,c})}_{(a\circ c)\circ\br{b,c}} \\
    & \hspace{3pc}+ \underbrace{\br{\br{c,\br{a,c}},b} + \br{\br{\br{a,c},b},c}}_{ -\br{\br{b,c},\br{a,c}} }\\
    & = \br{a\circ c,b\circ c} - \br{a,c}\circ(b\circ c) + (a\circ c)\circ\br{b,c} + \br{\br{a,c},\br{b,c}}.
\end{align*}
\end{proof}

%\bibliographystyle{plain}
%\bibliography{MasterBibliography}

\begin{thebibliography}{10}

\bibitem{AlfsenHanche-OlsenSchultz1980a}
Erik~M Alfsen, Harald Hanche-Olsen, and Frederic~W Shultz.
\newblock State spaces of c* algebras.
\newblock {\em Acta Mathematica}, 144(1):267--305, 1980.

\bibitem{AlfsenShultz1998a}
Erik~M Alfsen and Frederic~W Shultz.
\newblock {On Orientation and Dynamics in Operator Algebras Part I}.
\newblock {\em {Communications in Mathematical Physics}}, 194(1):87--108, 1998.

\bibitem{AlfsenSchultz2003book}
Erik~M Alfsen and Frederic~W Shultz.
\newblock {\em Geometry of state spaces of operator algebras}.
\newblock Springer, 2003.

\bibitem{AshtekSchill1999a}
Abhay Ashtekar and Troy~A Schilling.
\newblock Geometrical formulation of quantum mechanics.
\newblock In Alex Harvey, editor, {\em {On Einstein's path: essays in honor of
  Engelbert Sch\"ucking}}, pages 23--65. Springer-Verlag New York, Inc., 1999.

\bibitem{brodyhughston1998}
Dorje~C. Brody and Lane~P. Hughston.
\newblock Statistical geometry in quantum mechanics.
\newblock {\em Proceedings: Mathematical, Physical and Engineering Sciences},
  454(1977):pp. 2445--2475, 1998.

\bibitem{brody2001geometric}
Dorje~C Brody and Lane~P Hughston.
\newblock Geometric quantum mechanics.
\newblock {\em Journal of geometry and physics}, 38(1):19--53, 2001.
\newblock \href{http://arxiv.org/abs/quant-ph/9906086}{arXiv:quant-ph/9906086}.

\bibitem{brown1992a}
Lawrence~G Brown.
\newblock Complements to various {Stone-Weierstrass} theorems for c* algebras
  and a theorem of {Shultz}.
\newblock {\em Communications in Mathematical Physics}, 143(2):405--413, 1992.

\bibitem{CirelliEtAl1990}
Renzo Cirelli, Alessandro Mani\`a, and Livio Pizzocchero.
\newblock Quantum mechanics as an infinite‐dimensional hamiltonian system
  with uncertainty structure: Part i.
\newblock {\em Journal of Mathematical Physics}, 31(12):2891--2897, 1990.

\bibitem{connes1974a}
Alain Connes.
\newblock Caract\'erisation des espaces vectoriels ordonn\'es sous-jacents aux
  alg\'ebres de {von Neumann}.
\newblock In {\em Annales de l'institut Fourier}, volume 24(4), pages 121--155,
  1974.

\bibitem{FalcetoEtAl2013_JLB}
F.~{Falceto}, L.~{Ferro}, A.~{Ibort}, and G.~{Marmo}.
\newblock {Reduction of Lie--Jordan algebras: Quantum}.
\newblock {\em ArXiv e-prints}, 2013.

\bibitem{GelfandNaimark1943gns}
Israel~M. Gel'fand and Mark~A. Naimark.
\newblock On the imbedding of normed rings into the ring of operators in
  hilbert space.
\newblock {\em Matematiceskij sbornik}, 54(2):197--217, 1943.

\bibitem{Gibbons1992147}
G.W. Gibbons.
\newblock Typical states and density matrices.
\newblock {\em Journal of Geometry and Physics}, 8(1--4):147 -- 162, 1992.

\bibitem{HalvorsonClifton1999a}
Hans Halvorson and Rob Clifton.
\newblock Maximal beable subalgebras of quantum mechanical observables.
\newblock {\em International Journal of Theoretical Physics},
  38(10):2441--2484, 1999.
\newblock \href{http://philsci-archive.pitt.edu/65/}{philsci-archive:65}.

\bibitem{jordan1933a}
Pascual Jordan.
\newblock \"uber die multiplikation quantenmechanischer gr\"o{\ss}en.
\newblock {\em Zeitschrift f\"ur Physik}, 80(5-6):285--291, 1933.

\bibitem{JordvNeuWign1934}
Pascual Jordan, John {von Neumann}, and Eugene~P Wigner.
\newblock {On an Algebraic Generalization of the Quantum Mechanical Formalism}.
\newblock {\em Annals of Mathematics}, pages 29--64, 1934.

\bibitem{kibble1979geometrization}
T.W.B. Kibble.
\newblock Geometrization of quantum mechanics.
\newblock {\em Communications in Mathematical Physics}, 65(2):189--201, 1979.

\bibitem{landsman1997transitionprob}
N.~P. Landsman.
\newblock Poisson spaces with a transition probability.
\newblock {\em Reviews in Mathematical Physics}, 9(1):29--57, 1997.
\newblock \href{http://arxiv.org/abs/quant-ph/9603005}{arXiv:quant-ph/9603005}.

\bibitem{landsman1998mathematical}
N.P. Landsman.
\newblock {\em Mathematical topics between classical and quantum mechanics}.
\newblock New York: Springer-Verlag New York, Inc., 1998.

\bibitem{roberts-2014time}
Bryan~W. Roberts.
\newblock A general perspective on time observables.
\newblock {\em {Studies In History and Philosophy of Science Part B: Studies In
  History and Philosophy of Modern Physics}}, 47:50--54, 2014.

\bibitem{schilling-dissertation}
Troy~A. Schilling.
\newblock {\em {Geometry of Quantum Mechanics}}.
\newblock PhD thesis, The Pennsylvania State University, 1996.

\bibitem{segal1947gns}
I.~E. Segal.
\newblock {Postulates for General Quantum Mechanics}.
\newblock {\em Annals of Mathematics}, 48(4):930--948, 1947.

\bibitem{shultz1982a}
Frederic~W Shultz.
\newblock Pure states as a dual object for $C^*$ algebras.
\newblock {\em Communications in Mathematical Physics}, 82(4):497--509, 1982.

\bibitem{teh2012cloning}
Nicholas~J Teh.
\newblock On classical cloning and no-cloning.
\newblock {\em Studies in History and Philosophy of Modern Physics},
  43(1):47--63, 2012.

\bibitem{vonNeumann1936a}
John {von Neumann}.
\newblock On an algebraic generalization of the quantum mechanical formalism
  (part i).
\newblock {\em Mathematical Sbornik}, 1(43)(4):415--484, 1936.

\bibitem{wurzbacher2001a}
Tilmann Wurzbacher.
\newblock Introduction to differentiable manifolds and symplectic geometry.
\newblock In Sylvie~Paycha Hernan~Ocampo and Andres Reyes, editors, {\em
  Geometric Methods for Quantum Field Theory}, pages 1--116. River Edge, NJ and
  London: World Scientific Publishing Co. Pte. Ltd., 2001.

\end{thebibliography}

\end{document}